\newtheorem{theorem}{Theorem}
\newtheorem{definition}[theorem]{Definition}
\newlength{\figurewidth}
\newlength{\smallfigurewidth}
\newcommand {\bool} {\{0,1\}}
\newcommand {\ising} {\{-1,1\}}
\begin{document}

\title
{\large
\textbf{Compressed Quadratization of Higher Order Binary Optimization Problems}
}

\author{%
Avradip Mandal, Arnab Roy, Sarvagya Upadhyay and Hayato Ushijima-Mwesigwa\\[0.5em]
{\small\begin{minipage}{\linewidth}\begin{center}
\begin{tabular}{c}
Fujitsu Laboratories of America, Inc,
1240 E. Arques Ave., Sunnyvale, CA, 94085, USA \\
\{amandal, aroy, supadhyay, hayato\}@us.fujitsu.com
\end{tabular}
\end{center}\end{minipage}}
}

\maketitle
\thispagestyle{empty}

\begin{abstract}

Recent hardware advances in quantum and quantum-inspired annealers promise substantial speed up for solving NP-hard combinatorial optimization problems compared to general purpose computers. These special purpose hardware are built for solving hard instances of Quadratic Unconstrained Binary Optimization (QUBO) problems. In terms of number of variables and precision these hardware are usually resource constrained and they work either in Ising space $\ising$ or in Boolean space $\bool$. Many naturally occurring problem instances are higher order in nature. The known method to reduce the degree of a higher order optimization problem uses Rosenberg's polynomial. The method works in Boolean space by reducing degree of one term by introducing one extra variable. In this work, we prove that in Ising space the degree reduction of one term requires introduction of two variables. Our proposed method of degree reduction works directly in Ising space, as opposed to converting an Ising polynomial to Boolean space and applying previously known Rosenberg's polynomial. For sparse higher order Ising problems, this results in more compact representation of the resultant QUBO problem, which is crucial for utilizing resource constrained QUBO solvers.

\end{abstract}

\section{Introduction}
As we approach the physical limitation of Moore's law \cite{schaller1997moore}, a number of institutions have started to develop novel hardware specifically for solving combinatorial optimization problems. These include  adiabatic quantum computers \cite{johnson2011quantum}, CMOS annealers \cite{aramon2019physics,daweb,yamaoka201524,yoshimura2013spatial}, Coherent Ising Machines \cite{kielpinski2016information, mcmahon2016fully}, and as well as GPUs \cite{albash2018demonstration, inagaki2016coherent, king2019quantum} among others \cite{coffrin2019evaluating}.
These novel technologies are designed to find an assignment of the binary variables $\textbf{z} = (z_1, \cdots, z_n)$ that minimizes the following objective function:
\begin{eqnarray}\label{eq:hamiltonian}
 H(\textbf{z}) = \sum_{i<j} J_{ij}z_iz_j + \sum_i h_iz_i.
 \end{eqnarray}
 for $J_{ij}, h_i \in \mathbb{R}$. If $\textbf{z} \in \ising^n$, then this problem is referred to as the \emph{Ising Model}. It is referred to as the Quadratic Unconstrained Optimization Problem (QUBO) if $ \textbf{z} \in \{0,1\}^n$. The Ising model and QUBO are equivalent via a linear transformation of the variables. 

QUBO and the Ising model are quadratic models by definition. However, many real world optimization problems consist of multi-body interactions and are naturally modeled by higher order polynomials. These higher order polynomials have been referred to by different names in the literature, these include $k$-local Hamiltonian \cite{xia2017electronic}, psuedo-boolean optimization problem \cite{tanburn2015reducing}, Polynomial Unconstrained Binary Optimisation \cite{glover2011polynomial}, Higher Order Binary Optimization (HOBO). In particular the objective is to minimize (or maximize) the following function:
\begin{eqnarray}\label{eq:hobo}
 H(\textbf{z}) =  \sum_{i_1} J_{i_1}z_{i_1} + \sum_{i_1<i_2} J_{i_1i_2}z_{i_1}z_{i_2} + 
 \dots + \sum_{i_1 <i_2<\cdots < i_N}J_{i_1i_2\cdots i_N}\prod_{j =1}^N z_j
 \end{eqnarray}
for some $N \geq 3 $ with either $\textbf{z} \in \ising^n$ or $\{0,1\}^n$ where the coefficients $J_{i_1i_2\cdots i_k}$ are real numbers. For ease of exposition, we will refer to the domains $\ising^n$, and $\{0,1\}^n$ as \emph{Ising space}, and \emph{Boolean space} respectively and we exclusively use the variables $\textbf{s}, \textbf{x}$ and $\textbf{z}$ such that $\textbf{s} \in  \ising^n$, $\textbf{x} \in  \{0,1\}^n$, and $\textbf{z}$ when a distinction is not necessary. 
A natural way to utilize the emerging hardware for solving such problems is to first transform higher-order problem to a quadratic one, by a process that has been termed as \emph{quadratization}, then solve the quadratic problem with the given hardware. Thus, a large number of studies have focused on techniques for quadratization of HOBOs. However, to the best of our knowledge, all of these methods take place within the Boolean space. Thus, current methods for solving HOBOs in Ising space consist of first transforming them into Boolean space via the transformation $\textbf{s} = 2\textbf{x} - \mathbbm{1}$ and then applying the known techniques of quadratization. However, this approach has one major drawback, a \emph{sparse} problem in Ising space \emph{is not necessarily sparse} in Boolean space and vice versa. For example the single term $ \prod_{i=1}^Ns_i = \prod_{i=1}^N(2x_i -1)$ in Ising space consists of $2^N$ terms in Boolean space.

\paragraph{Contribution:} In this work, we develop a method of degree reduction within Ising space. To the best of our knowledge, this is the first method that does not require a transformation into Boolean space, thus resulting into a compressed quadratization technique when the polynomial in Ising space is sparse. In addition, we propose two algorithms to do quadratization of any higher order Boolean polynomial over Boolean or Ising space.

\section{How to encode quadratic equality constraints}\label{sec:twoauxIsing}
Suppose we want to minimize a quadratic function $f(x_1, x_2, y)$ over binary variables $x_1, x_2, y$, subject to the constraint $y = x_1 x_2$. However, the given hardware only supports problems modeled as a QUBO or Ising Model which are unconstrained by definition. We would like to transform the problem into one suitable for the given hardware. The equivalent unconstrained optimization problem is the minimization of $g'(x_1, x_2, y) = f(x_1, x_2, x_3) + M \cdot h'(x_1, x_2, y)$, where $M$ is a large positive constant\footnote{$M$ must be bigger than $\min f(x_1, x_2, x_1 x_2) - \min f(x_1, x_2, y)$} and $h'(x_1, x_2, y) = (y - x_1 x_2)^2$. Note that, in Ising space we have $x^2 = 1$ and in Boolean space we have $x^2 = x$. Hence, $g'$ is actually a cubic polynomial, even though $f$ is quadratic. We converted the constrained optimization problem to an unconstrained one, but in the process our quadratic optimization became a cubic optimization problem. Rosenberg~\cite{rosenberg1975reduction} showed that need not be the case if we restrict ourselves to binary variables over Boolean spaces, i.e. $x_1, x_2, y \in \bool$. One can use the following quadratic function:
\begin{align}
	\label{eq:rosenberg}
	h(x_1, x_2, y) = 3y + x_1x_2 -2x_1y -2 x_2y.
\end{align}
It is easy to verify that for $x_1, x_2, y \in \bool$,
\[
y = x_1 x_2 \Leftrightarrow h(x_1, x_2, y) = 0
\qquad \text{and} \qquad
y \neq x_1 x_2 \Leftrightarrow h(x_1, x_2, y) > 0.
\]
This shows, if our initial goal was to minimize a quadratic function $f(x_1, x_2, y)$ over Boolean variables $x_1, x_2, y \in \bool$, subject to the constraint $y = x_1 x_2$, we can minimize the quadratic function $g(x_1, x_2, y) = f(x_1,x_2, y)  + M \cdot h(x_1, x_2, y)$ without any constraint and use any Quadratic  Unconstrained Boolean Optimization hardware. 

\newcommand{\eqdef}{\stackrel{\rm def}{=}}


We next explore the question whether we can find an analogue of the Rosenberg polynomial in the Ising space. The answer turns out to be a little more complicated. We first show that such a polynomial does not exist when we allow just one extra variable. 

\begin{theorem}
    There is no quadratic polynomial $h(s_1, s_2, y)$, such that for all $s_1, s_2, y \in \ising$
    \[
    y = s_1 s_2 \Leftrightarrow h(s_1, s_2, y) = 0 
    \qquad \text{and} \qquad
    y \neq s_1 s_2 \Leftrightarrow h(s_1, s_2, y) > 0.
    \]
\end{theorem}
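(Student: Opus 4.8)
The plan is to set up the most general quadratic polynomial in three Ising variables and derive a contradiction from the four required vanishing conditions together with the positivity condition on the complementary cases. Since $s_i^2 = 1$ in Ising space, the most general such polynomial reduces to an affine-multilinear form
\begin{equation}
h(s_1, s_2, y) = a_0 + a_1 s_1 + a_2 s_2 + a_3 y + b_{12} s_1 s_2 + b_{1y} s_1 y + b_{2y} s_2 y,
\end{equation}
where I drop any genuinely quadratic monomials like $s_1^2$ because they collapse to constants. Note there is no $s_1 s_2 y$ term since that would be cubic. The constraint $y = s_1 s_2$ is satisfied by exactly four of the eight points of $\ising^3$, and fails on the other four. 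My requirement is that $h$ vanishes on those four ``good'' points and is strictly positive on the four ``bad'' points.

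First I would write down the four equations $h = 0$ coming from the good assignments $(s_1, s_2, y) \in \{(1,1,1), (1,-1,-1), (-1,1,-1), (-1,-1,1)\}$. These are linear in the seven coefficients, so I would solve this homogeneous-plus-constant linear system, expecting it to pin down several coefficients and leave a small number of free parameters. The natural next step is to evaluate $h$ at the four bad points $\{(1,1,-1),(1,-1,1),(-1,1,1),(-1,-1,-1)\}$ and insist each value be strictly positive, giving four strict inequalities in the remaining free parameters.

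The key structural fact I would exploit is a symmetry/summation argument: if I add the values of $h$ over all four good points, every odd monomial ($s_1$, $s_2$, $y$, $s_1 s_2$, $s_1 y$, $s_2 y$) sums to zero over that symmetric set, leaving only $4a_0$; since each good value is $0$, this forces $a_0 = 0$. Running the identical summation over the four bad points again kills every non-constant monomial and leaves $4a_0$, which must now be a sum of four strictly positive numbers. That is the contradiction: the same quantity $4a_0$ is forced to be both zero and strictly positive. This averaging observation is really the heart of the proof, and it sidesteps solving the full linear system explicitly.

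The main obstacle, and the point I would be most careful about, is verifying that every monomial appearing in $h$ indeed sums to zero over each four-point set. This hinges on checking that the good set and the bad set are each ``balanced'' with respect to each of the six linear and bilinear monomials — i.e. that on each four-point set every one of $s_1, s_2, y, s_1 s_2, s_1 y, s_2 y$ takes the value $+1$ exactly twice and $-1$ exactly twice. For the good set this is immediate because $y = s_1 s_2$ makes all three pairwise products equal to the remaining variable, and for the bad set $y = -s_1 s_2$ flips the relevant signs symmetrically; I would confirm each case in a short table. Once that balance is established the contradiction is forced with no further computation, so the whole argument rests on this combinatorial bookkeeping rather than on any delicate estimate.
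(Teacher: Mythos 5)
Your proof is correct and is essentially the paper's argument in a more streamlined form: the paper also writes $h$ as a general multilinear form, encodes the four equalities and four strict inequalities as a linear system, and derives the contradiction by noting that the sum of the four quantities required to be $\geq 1$ is forced to equal $0$. Your symmetry observation (each non-constant monomial is balanced on both four-point sets, so each sum collapses to $4a_0$) replaces the paper's explicit kernel computation but yields the identical contradiction.
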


\begin{proof}
We want to construct a quadratic polynomial $ h $, such that (i) if $y = s_1 s_2$, then $h(s_1, s_2, y) = 0$; and (ii) if $y \neq s_1 s_2$, then $h(s_1, s_2, y) \geq 1$ (the choice of $1$ on right hand side is arbitrary; any other positive number will also work).
Let
\[
	h(s_1, s_2, y) = a_0 + a_1 s_1 + a_2 s_2 + a_3 y + a_{12} s_1 s_2 + a_{13} s_1 y + a_{23} s_2 y \eqdef s^{\top}a,
\]
where
\[
s = \begin{pmatrix}
	1 & s_1 & s_2 & y & s_1 s_2 & s_1 y & s_2 y
	\end{pmatrix}^{\top} \quad \text{and} \quad a = \begin{pmatrix}
		a_0 & a_1 & a_2 & a_3 & a_{12} & a_{13} & a_{23} 
	\end{pmatrix}^{\top}.
\]
Here $X^{\top}$ denotes transpose of a matrix $X$. The set of equality and inequality conditions can be captured by matrices $E$ and $F$ as follows.
\[
E = \begin{pmatrix}
		1 & -1 & -1 & 1 & 1 & -1 & -1
		\\
		1 & -1 & 1 & -1 & -1 & 1 & -1
		\\
		1 & 1 & -1 & -1 & -1 & -1 & 1
		\\
		1 & 1 & 1 & 1 & 1 & 1 & 1
	\end{pmatrix}
\text{ and }
F = \begin{pmatrix}
		1 & -1 & -1 & -1 & 1 & 1 & 1
		\\
		1 & -1 & 1 & 1 & -1 & -1 & 1
		\\
		1 & 1 & -1 & 1 & -1 & 1 & -1
		\\
		1 & 1 & 1 & -1 & 1 & -1 & -1
	\end{pmatrix}.
\]
Let $e \in \mathbb{R}^4$ denote the vector of all 1's: $e = \begin{pmatrix} 1 & 1 & 1 & 1 \end{pmatrix}^{\top}$. Let $\mathbf{0} \in \mathbb{R}^4$ denote the vector of all 0's. Then we have that
\[
Ea = \mathbf{0} \qquad \text{and} \qquad Fa \geq e.
\]
From $Ea = \mathbf{0}$, we get that $a$ is in the right kernel of $E$, which is given by the following matrix:
\[
	K \eqdef
	\begin{pmatrix} 
	    0 & 1 & 0 & 0 & 0 & 0 & -1 \\
	    0 & 0 & 1 & 0 & 0 & -1 & 0 \\
	    0 & 0 & 0 & 1 & -1 & 0 & 0
	\end{pmatrix}^{\top}.
\]
Let us write $a \eqdef Kb$ for some $b \in \mathbb{R}^3$. Since $Fa \geq e$, we have that $FKb \geq e$. Since
\[
FK = \begin{pmatrix}
		-2 & -2 & -2 \\
		-2 & 2 & 2 \\ 
		2 & -2 & 2 \\
		2 & 2 & -2
	\end{pmatrix},
	\quad \text{we have that} \quad
FKb = \begin{pmatrix}
	-2(b_1 + b_2 +b_3) \\
	-2(b_1 - b_2 - b_3) \\
	2(b_1 - b_2 + b_3) \\
	2(b_1 + b_2 - b_3)
	\end{pmatrix} 
	\geq e = \begin{pmatrix} 1 \\ 1 \\ 1 \\ 1 \end{pmatrix}.
\]
Observe that sum of entries of the vector $FKb$ is $0$, whereas for the inequality to hold, it should necessarily be at least $4$. Hence, no feasible solution exists.
\end{proof}

On the other hand, if we allow two variables then we come up with such a polynomial. To this end, we employ an extra variable $d$ and aim to find a {\it positive} quadratic polynomial $h(s_1, s_2, y, d)$ such that (i) when $y=s_1 s_2$, $h(s_1, s_2, y, d) = 0$ for some $d \in \{-1,1\}$; and (ii) when $y \ne s_1 s_2$,  $h(s_1, s_2, y, d) > 0$ for all $d \in \{-1,1\}$. By positive polynomial, we mean that $h(s_1, s_2, y, d) \geq 0$ for all choices of the four variables.


Observe that, for each choice of $ d $ as a function of $ s_1, s_2 $, the above constraints give rise to a linear system of inequalities, similar to the last section. If any of these 16 choices give us a feasible solution, then that solves our problem. We carry out some of these choices in an ILP solver and obtain the following solution:
\[
\fbox{
	$h (s_1, s_2, y, d) = 4 + s_1 + s_2 - y - 2d + s_1 s_2 - s_1 y - s_2 y - 2 s_1 d - 2 s_2 d + 2 y d$
}
\]

In Section \ref{sec:hobo-to-qubo} we  consider higher order unconstrained binary optimization problems. We show how introduction of new variables and repeated application of the above technique can help us reduce the problem to quadratic case. If we restrict ourselves only to Boolean variables,  \cite{dattani2019quadratization, anthony2017quadratic,rodriguez2018linear} are excellent survey of previous work in this domain. 

\section{Quadratization of higher order optimization problem}
\label{sec:hobo-to-qubo}

This section details out our quadratization of higher order Boolean polynomials. A formal definition of quadratization follows.

\begin{definition}\label{def:quadratization}
Let $\mathcal{B}$ denote either of the two sets: $\{0,1\}$ and $\{-1,1\}$. Given a higher order Boolean function $f:\mathcal{B}^n \rightarrow \mathbb{R}$, we say that any function $h(z,y)$ is a quadratization of $f(z)$ if $h(z,y)$ is a quadratic polynomial depending on $z$ and auxiliary variables $y = (y_1, \dots, y_m)$ such that
\[
f(z) = \min_{y \in \mathcal{B}^m} h(x,y), \quad \text{for all} \quad z \in \mathcal{B}^n.
\]
\end{definition}

From Definition~\ref{def:quadratization}, it is evidently clear that minimizing $f(z)$ is tantamount to minimizing $h(z,y)$ over $\mathcal{B}^{n+m}$. 
One way to do quadratization is to convert each monomial into a quadratic polynomial. In literature, family of such quadratization techniques are known as {\it termwise quadratization}. They are usually very effective when the domain of each variable is $\{0,1\}$ and the higher order polynomial is sparse. In fact, there are various techniques that reduces the number of auxiliary variable introduced when converting a monomial. For instance, if one wishes to minimize $f(x)$ and there is a negative monomial in $f(x)$, then one can use exactly one auxiliary variable and convert the monomial to a quadratic polynomial~\cite{freedman2005energy}. That is, the following two polynomials are equivalent:
$\prod_{i=1}^d x_i$ and $\min_{y \in \{0,1\}} \left\{(d-1)y - \sum_{i=1}^dx_iy\right\}$.
Likewise, Rodrigues-Heck~\cite{rodriguez2018linear} have shown that a positive monomial of degree $d$ can be replaced by $\lceil \log d \rceil - 1$ auxiliary variables. 

Unfortunately, these techniques have two main issues. First of all, the fact that a monomial over $\{0,1\}$ is highly likely to be $0$ are crucially important for such termwise quadratizations. When moving to $\{-1,1\}$ domain, these techniques do not work. Second of all, the termwise quadratization is by definition "local" in nature and does not cater to a "global" replacement of terms. We present two heuristic techniques that are independent of the domain space and focus on quadratization keeping the whole expression in mind. 

Before we discuss our algorithms for converting higher order Boolean functions to quadratic Boolean functions respecting Definition~\ref{def:quadratization}, we would like to stress on the fact that such a quadratization in the $\{-1,1\}$ domain space can potentially give significant savings in number of auxiliary variables. As an example, consider the following monomial
$\prod_{i=1}^n s_i$  where $s = (s_1, \dots, s_n) \in \{-1,1\}^n.$
From Section~\ref{sec:twoauxIsing}, it is evidently clear that the number of auxiliary variables required to convert this monomial to a quadratic polynomial over the domain space $\{-1,1\}$ is $O(n)$. If we try to convert the variables $x_i$ to variables over $\{0,1\}$, then the resulting polynomial will be
$\prod_{i=1}^n (1-2x_i)$ where $ x_i = \frac{1 - s_i}{2} \in \{0,1\}.$
This polynomial is dense and included all possible monomial terms in $x_i$. Even if we assume that each monomial can be quadratized by exactly one variable, the termwise quadratization will still require at least $\Omega\left(2^n\right)$ auxiliary variables. The algorithms presented below will also require significantly more variables in $\{0,1\}$ domain than $O(n)$ variables required when doing conversion in $\{-1,1\}$ domain space.

Having discussed the importance of quadratization in $\{-1,1\}$ domain space, we present our algorithms below. We stress that the final steps of both the algorithms are same. Post quadratization, we invoke either the Rosenberg polynomial (over Boolean space) or the polynomial described in Section~\ref{sec:twoauxIsing} (over Ising space) to impose the constraints between the auxiliary variables and the quadratic term they replace. Our Algorithm~\ref{fig:hobotoqubo1} is detailed out next.

\begin{algorithm}[htbp]
\caption{{\scshape Hobo To Qubo 1}}
\label{fig:hobotoqubo1}
\begin{algorithmic}[1]

\Require{A higher order binary optimization (HOBO) over $(z_1, z_2, \dots, z_n) \in \mathcal{B}^n$ where $\mathcal{B}$} is either $\{-1,1\}$ or $\{0,1\}$.

\Ensure{A QUBO equivalent to the given higher order binary optimization problem}

\State Sort the indices of variables and create a data structure of (key, value) pairs where key is the set of all quadratic terms appearing in HOBO and value is the set of all monomials of degree at least $3$.

\While{all the keys are deleted}

    \State Select the key with largest number of values and replace it with a variable

    \State Update the data structure by adding keys and values for new variable
    
    \State Delete all degree 3 terms that involved the key
    
    \State Delete the key if all the values of the key has been deleted

    \State Store the variable and the quadratic term it substitutes in a map
    
\EndWhile

\State Invoke the quadratic polynomial corresponding for each map of auxiliary variable and the quadratic term

\State {\bf Return} The QUBO equivalent to given HOBO problem.
\end{algorithmic}
\end{algorithm}

The idea of the algorithm is pretty simple. The algorithm starts with a hash table where keys are all possible pairs $\{(z_i, z_j) : 1 \leq i < j \leq n\}$ and values corresponding to a key $(z_i, z_j)$ are monomials of degree at least $3$ containing both $z_i$ and $z_j$. Our algorithm greedily replaces the key with largest number of values by an auxiliary variable and then update the hash table by introducing the auxiliary variable in the key. 
The degree of all values for which the key has been replaced by the auxiliary variable will decrease by 1. The values with degree $3$ containing the replaced key will become quadratic and they are deleted from the table. This is repeated until the table is empty. Finally, once the hash table is empty, we invoke either the Rosenberg polynomial (when $\mathcal{B} = \{0,1\}$) or the polynomial described in Section~\ref{sec:twoauxIsing} (when $\mathcal{B} = \{-1,1\}$) to impose the constraints between the auxiliary variables and the quadratic term they replace.

We now proceed to present our second heuristic as Algorithm~\ref{fig:hobotoqubo2}. This is again a greedy approach with a different objective to reduce. Simply put, we have a dynamic weight bipartite graph with quadratic terms on left hand side (LHS) and monomials of degree at least 3 on right hand side (RHS). There exists an edge between LHS and RHS vertices if and only if the RHS vertex contains the LHS vertex in the expression; the edge weight is simply one less the degree of the monomial. In this case, we proceed to replace the LHS vertex with maximum sum of edge weights with a new auxiliary variable and update the graph by introducing the quadratic terms involving the new variable in the graph. We remove all degree three monomials for the replaced LHS vertex (as they are quadratic now). We repeat these sequence of steps until all the RHS vertices become quadratic. 

\begin{algorithm}[htbp]
\caption{{\scshape Hobo To Qubo 2}}
\label{fig:hobotoqubo2}
\begin{algorithmic}[1]

\Require{A higher order binary optimization (HOBO) over $(z_1, z_2, \dots, z_n) \in \mathcal{B}^n$ where $\mathcal{B}$} is either $\{-1,1\}$ or $\{0,1\}$.

\Ensure{A QUBO equivalent to the given higher order binary optimization problem}

\State Sort the indices of variables and construct a weighted bipartite graph
    \State All possible combination of quadratic terms appearing in HOBO are left nodes and all monomials in the HOBO are right nodes
    
    \State Edges exist if monomial contains the quadratic term

    \State Edge weights are simply the degree of the monomial - 1

\While{there exists an edge on the graph}

    \State Replace the quadratic term with variable with largest sum of edge weights
    
    \State Remove all degree 3 terms that involved the quadratic term
    
    \State Remove quadratic terms if there is no edge originating from it
    
    \State Update the graph after adding quadratic terms involving new variable
    
    \State Store the variable and the quadratic term it substitutes in a map
    
\EndWhile

\State Invoke the quadratic polynomial corresponding for each map of auxiliary variable and the quadratic term

\State {\bf Return} The QUBO equivalent to given HOBO problem.
\end{algorithmic}
\end{algorithm}


\section{Applications}
Many real-world problems can be modeled as a HOBO. A large number of applications have been modeled in the quadratic case. For example,  in combinatorial scientific computing \cite{shaydulin2018community, shaydulin2019network, ushijima2017graph,ushijima2019multilevel},  chemistry \cite{hernandez2016novel, terry2019quantum}, and machine learning \cite{crawford2016reinforcement, khoshaman2018quantum, negre2019detecting} 
 In the following subsections, we give examples that consist of problems modeled as a HOBO with higher-order terms greater than two.
\subsection{Problems on Hypergraphs}
Many NP-hard problems on graphs can easily be encoded in the quadratic case as a QUBO \cite{lucas2014ising}. A hypergraph is a generalization of a graph such that a (hyper-) edge may contain more than two nodes. This generalization thus provides an area rich in problems that can be modeled as a HOBO.  For example, the Hypergraph Max-Covering problem \cite{he2013approximation}. Given a hypergraph $H=(V , E)$ with vertex set $V$ and hyperedge set $E$, with an associated edge weight $w(e) \in \mathbb{R}$, for each $e\in E$. The problem is to find a subset of nodes $V' \subset V$ such that the total weight of hyperedges completely covered by $V'$ is maximized. If $x_i \in \bool$ is 1 when $x_i \in V'$ and 0 otherwise, then this problem is simply 
\begin{equation}
    \max \sum_{e \in E} w(e)\prod_{i\in e}x_i.
\end{equation}
Another example is finding the min-vertex cover of a hypergraph where the problem is to determine a minimum carnality set $V' \subset V$ such that every hyperedge contains at least one node in $V'$. This is simply equivalent to solving
\begin{equation}
    \min \sum_{i=1}^n (1-x_i) + M\sum_{e \in E} \prod_{i\in e}x_i ,
\end{equation}
for any $M>n.$ In this case $x_i=0$ if $v_i \in V'$. These problems have applications to scheduling problems \cite{bansal2010inapproximability}. Within the Ising space, the Hypergraph Partitioning Problem and the Hypergraph MAX-CUT problem are natural examples. The hypergraph partitioning deals with partitioning the vertex set equally such that the number of cut-edges is minimized, while the Hypergraph MAX-CUT problem is to partition the vertex set such that the number of cut-edges is maximized, where a cut-edge is defined as a hyperedge that contains nodes in more than one part. If $s_i \in \ising$ represents the decision variable for node $i$ belonging to one part or the other, then we can show that the Hypergraph MAX-CUT problem is given by 
\begin{equation}
    \max \sum_{e \in E} \left( 1 - \frac{1}{4^{|e|}}\prod_{i \in e}(s_{i_e} + s_i)^2\right)
\end{equation}
where ${i_e}$ represents the smallest index of the nodes in $e$. Likewise the Hypergraph Partitioning Problem would minimize the same objective function subject to the balancing constraint $\sum_i s_i = 0$, which can be added as a quadratic penalty to give 
\begin{equation}
    \min \sum_{e \in E} \left( 1 - \frac{1}{4^{|e|}}\prod_{i \in e}(s_{i_e} + s_i)^2\right) + A(\sum_is_i)^2
\end{equation}
for a large enough $A>0$.

    

\subsection{(Weighted) Maximum satisfiability problem}
Maximum satisfiability (MAX-SAT) is a generalization of the well known Boolean satisfiability problem. Given a Boolean formula in conjunctive normal form, the goal is to find the assignment that maximizes the number of satisfying clauses. This problem is known to be NP-hard, because a MAX-SAT solver can be used for solving NP-complete SAT problems, where the goal is to decide whether there exists an assignment that satisfies all clauses or not. In weighted MAX-SAT problem, every clause is associated with a positive penalty score. The goal is to minimize the total penalty due to all non satisfying clauses. This can naturally be expressed as a higher order Boolean optimization problem as follows.

Consider a clause $c$, which is associated with a penalty  $p_c > 0$. Let, $S_c^+$ and $S_c^-$ be the sets of variables which are not negated and negated in $c$. E.g. if $c = \overline{x_0} \vee x_1 \vee x_2 \vee \overline{x_4}$, then $S_c^+ = \{x_1, x_2\}$ and $S_c^- = \{x_0, x_4\}$. If $C$ is the set of all clauses in a MAX-SAT problem, it is equivalent to minimization of the following polynomial over Boolean variables:
$ \sum_{c \in C} p_c \prod_{u \in S_c^+} (1-u) \prod_{v \in S_c^-} v$

\subsection{Electronic Structure calculations on adiabatic quantum annealers}
In recent works, Xia et al\cite{xia2017electronic} and Streif et al~\cite{streif2019solving} considered quantum chemistry problems of finding the exact ground state energy of various molecules using commercially available quantum annealers from D-Wave. D-Wave's quantum annealer is essentially a quadratic Ising solver and supports problem hamiltonian of the form $H_p = \sum_i h_i \sigma_z^i + \sum_{i,j} J_{ij} \sigma_z^i \sigma_z^j$. In their work, initially Xia et. al. reduced the molecular structure Hamiltonian problem to a '$k$-local' Ising Hamiltonian problem of the form $\sum_{i_1} h_{i_1} \sigma_z^{i_1} + \sum_{i_1,i_2} J_{{i_1}{i_2}} \sigma_z^{i_1} \sigma_z^{i_2} + \sum_{i_1,i_2, i_3} J_{{i_1}{i_2}{i_3}} \sigma_z^{i_1} \sigma_z^{i_2} \sigma_z^{i_3} + \cdots$. Afterwards, they converted the problem to Boolean domain and applied Rosenberg polynomial (Equation \eqref{eq:rosenberg}) iteratively to convert the problem to $2$-local Hamiltonian supported by D-Wave annealers. As discussed in Section \ref{sec:hobo-to-qubo}, this method introduces exponentially many auxiliary variables. In fact, \cite{xia2017electronic} mentions a single $\sigma_x^1 \sigma_x^2 \sigma_y^3 \sigma_y^4$ term in the original molecular Hamiltonian corresponds to more than $1000$ terms in the final $2$-local Ising Hamiltonian form. By performing the conversion directly in the Ising domain, we should be able to drastically reduce the number of auxiliary variables and terms.


\section{Experimental Results and Discussion}
A synthetic dataset of sparse instances of HOBO problems over Ising space was developed for experimentation purposes. The data set is available at \cite{isingdata}. Table \ref{tab:dataset} describes distribution of monomials in the dataset, after performing some basic preprocessing. For example, if a linear term has a relatively large positive coefficient, we know the corresponding variable must be $-1$. For converting the higher order problem to a quadratic one, we have four options: we can either convert it to Boolean space and use Rosenberg Polynomial to reduce its degree, or we can use our polynomial and directly perform the degree reduction in Ising space. In either of the option we can also choose to use either Algorithm 1 or Algorithm 2 for variable choice heuristic. Table \ref{tab:results} describes the number of variables and terms in the resultant quadratic polynomial. The performance of Algorithm 1 and Algorithm 2 are comparable to each other. However, as original problems are sparse in Ising domain, converting the problem using our polynomial results in significantly more compact representation compared to Boolean domain degree reduction using Rosenberg polynomial.

\begin{table}[htp]
    \centering
    \begin{tabular}{|*{16}{c|}} 
        \hline
        Dataset & Variables & \multicolumn{14}{c|}{Terms in Degree} \\
        \hline
        &&1&2&3&4&5&6&7&8&9&10&11&12&13&14\\
        \hline
        D20A & 15 & 15 & 105 & 60 & 53 & 49 & 49 & 48 & 37 & 20 & 23 & 12 & 4 & - & - \\
        \hline
        D20B & 14 & 14 & 91 & 60 & 55 & 38 & 31 & 10 & 5 & 6 & - & - & - & - & - \\
        \hline
        D20C & 15 & 15 & 105 & 62 & 47 & 52 & 33 & 46 & 49 & 26 & 22 & 26 & 17 & 7 & 1 \\
        \hline
        D30A & 17 & 17 & 136 & 98 & 61 & 50 & 30 & 28 & 22 & 23 & 6 & 3 & 1 & 2 & - \\
        \hline
        D30B & 18 & 18 & 153 & 130 & 66 & 50 & 41 & 35 & 14 & 12 & 4 & 2 & - & - & - \\
        \hline
        D30C & 20 & 20 & 190 & 114 & 65 & 58 & 50 & 44 & 24 & 23 & 7 & - & 2 & - & - \\
        \hline 
    \end{tabular}
    \caption{Distribution of number of monomials of different degrees in the dataset} \label{tab:dataset}
\end{table}

\begin{table}[ht]
    \centering
    \begin{tabular}{|c|cc|cc|cc|cc|} 
        \hline
        Dataset & \multicolumn{2}{c|}{Ising, Algo 1} & \multicolumn{2}{c|}{Ising, Algo 2} &  \multicolumn{2}{c|}{Boolean, Algo 1} & \multicolumn{2}{c|}{Boolean, Algo 2}\\
        \hline
        &variables&terms&variables&terms&variables&terms&variables&terms \\
        \hline
        D20A  &  561 & 2581&  585 & 2685 & 597 & 26025 & 583 & 25983\\
        \hline
        D20B & 274 & 1290 &  278 & 1307 &  303  & 4273 & 314 & 4306\\
        \hline
        D20C & 621 & 2857 &  629 &2888 &  730 & 33429 & 1035 & 34344 \\
        \hline
        D30A & 545 & 2493 & 541 & 2489 &  1034 & 31189 & 1003 & 31096 \\
        \hline
        D30B & 512 & 2405 &  526 & 2466 &  751 & 15397 &  800 & 15544 \\
        \hline
        D30C & 706 & 3230 & 716 & 3269 &  1478 & 28103 &  1414 & 27911\\
        \hline

    \end{tabular}
    \caption{Experimental Results} \label{tab:results}
\end{table}
 Our experimental results demonstrate that the proposed approach facilitates compressed quadratization of higher order Ising problems.
Future work includes integrating this methodology into real-world applications. 
\newpage
\Section{References}


\begin{thebibliography}{10}
	
	\bibitem{schaller1997moore}
	Robert~R Schaller,
	\newblock ``Moore's law: past, present and future,''
	\newblock {\em IEEE spectrum}, vol. 34, no. 6, pp. 52--59, 1997.
	
	\bibitem{johnson2011quantum}
	Mark~W Johnson, Mohammad~HS Amin, Suzanne Gildert, Trevor Lanting, Firas Hamze,
	Neil Dickson, R~Harris, Andrew~J Berkley, Jan Johansson, Paul Bunyk, et~al.,
	\newblock ``Quantum annealing with manufactured spins,''
	\newblock {\em Nature}, vol. 473, no. 7346, pp. 194, 2011.
	
	\bibitem{aramon2019physics}
	Maliheh Aramon, Gili Rosenberg, Elisabetta Valiante, Toshiyuki Miyazawa,
	Hirotaka Tamura, and Helmut Katzgrabeer,
	\newblock ``Physics-inspired optimization for quadratic unconstrained problems
	using a digital annealer,''
	\newblock {\em Frontiers in Physics}, vol. 7, pp. 48, 2019.
	
	\bibitem{daweb}
	Fujitsu Limited,
	\newblock ``Digital annealer,''
	\url{https://www.fujitsu.com/jp/digitalannealer/ }.
	
	\bibitem{yamaoka201524}
	Masanao Yamaoka, Chihiro Yoshimura, Masato Hayashi, Takuya Okuyama, Hidetaka
	Aoki, and Hiroyuki Mizuno,
	\newblock ``24.3 20k-spin ising chip for combinational optimization problem
	with cmos annealing,''
	\newblock in {\em 2015 IEEE International Solid-State Circuits
		Conference-(ISSCC) Digest of Technical Papers}. IEEE, 2015, pp. 1--3.
	
	\bibitem{yoshimura2013spatial}
	Chihiro Yoshimura, Masanao Yamaoka, Hidetaka Aoki, and Hiroyuki Mizuno,
	\newblock ``Spatial computing architecture using randomness of memory cell
	stability under voltage control,''
	\newblock in {\em 2013 European Conference on Circuit Theory and Design
		(ECCTD)}. IEEE, 2013, pp. 1--4.
	
	\bibitem{kielpinski2016information}
	David Kielpinski, Ranojoy Bose, Jason Pelc, Thomas Van~Vaerenbergh, Gabriel
	Mendoza, Nikolas Tezak, and Raymond~G Beausoleil,
	\newblock ``Information processing with large-scale optical integrated
	circuits,''
	\newblock in {\em 2016 IEEE International Conference on Rebooting Computing
		(ICRC)}. IEEE, 2016, pp. 1--4.
	
	\bibitem{mcmahon2016fully}
	Peter~L McMahon, Alireza Marandi, Yoshitaka Haribara, Ryan Hamerly, Carsten
	Langrock, Shuhei Tamate, Takahiro Inagaki, Hiroki Takesue, Shoko Utsunomiya,
	Kazuyuki Aihara, et~al.,
	\newblock ``A fully programmable 100-spin coherent ising machine with
	all-to-all connections,''
	\newblock {\em Science}, vol. 354, no. 6312, pp. 614--617, 2016.
	
	\bibitem{albash2018demonstration}
	Tameem Albash and Daniel~A Lidar,
	\newblock ``Demonstration of a scaling advantage for a quantum annealer over
	simulated annealing,''
	\newblock {\em Physical Review X}, vol. 8, no. 3, pp. 031016, 2018.
	
	\bibitem{inagaki2016coherent}
	Takahiro Inagaki, Yoshitaka Haribara, Koji Igarashi, Tomohiro Sonobe, Shuhei
	Tamate, Toshimori Honjo, Alireza Marandi, Peter~L McMahon, Takeshi Umeki,
	Koji Enbutsu, et~al.,
	\newblock ``A coherent ising machine for 2000-node optimization problems,''
	\newblock {\em Science}, vol. 354, no. 6312, pp. 603--606, 2016.
	
	\bibitem{king2019quantum}
	James King, Sheir Yarkoni, Jack Raymond, Isil Ozfidan, Andrew~D King,
	Mayssam~Mohammadi Nevisi, Jeremy~P Hilton, and Catherine~C McGeoch,
	\newblock ``Quantum annealing amid local ruggedness and global frustration,''
	\newblock {\em Journal of the Physical Society of Japan}, vol. 88, no. 6, pp.
	061007, 2019.
	
	\bibitem{coffrin2019evaluating}
	Carleton Coffrin, Harsha Nagarajan, and Russell Bent,
	\newblock ``Evaluating ising processing units with integer programming,''
	\newblock in {\em International Conference on Integration of Constraint
		Programming, Artificial Intelligence, and Operations Research}. Springer,
	2019, pp. 163--181.
	
	\bibitem{xia2017electronic}
	Rongxin Xia, Teng Bian, and Sabre Kais,
	\newblock ``Electronic structure calculations and the ising hamiltonian,''
	\newblock {\em The Journal of Physical Chemistry B}, vol. 122, no. 13, pp.
	3384--3395, 2017.
	
	\bibitem{tanburn2015reducing}
	Richard Tanburn, Emile Okada, and Nike Dattani,
	\newblock ``Reducing multi-qubit interactions in adiabatic quantum computation
	without adding auxiliary qubits. part 1: The" deduc-reduc" method and its
	application to quantum factorization of numbers,''
	\newblock {\em arXiv preprint arXiv:1508.04816}, 2015.
	
	\bibitem{glover2011polynomial}
	Fred Glover, Jin-Kao Hao, and Gary Kochenberger,
	\newblock ``Polynomial unconstrained binary optimisation--part 1,''
	\newblock {\em International Journal of Metaheuristics}, vol. 1, no. 3, pp.
	232--256, 2011.
	
	\bibitem{rosenberg1975reduction}
	Ivo~G Rosenberg,
	\newblock ``Reduction of bivalent maximization to the quadratic case,''
	\newblock {\em Cahiers du Centre d’etudes de Recherche Operationnelle}, vol.
	17, pp. 71--74, 1975.
	
	\bibitem{dattani2019quadratization}
	Nike Dattani,
	\newblock ``Quadratization in discrete optimization and quantum mechanics,''
	\newblock {\em arXiv preprint arXiv:1901.04405}, 2019.
	
	\bibitem{anthony2017quadratic}
	Martin Anthony, Endre Boros, Yves Crama, and Aritanan Gruber,
	\newblock ``Quadratic reformulations of nonlinear binary optimization
	problems,''
	\newblock {\em Mathematical Programming}, vol. 162, no. 1-2, pp. 115--144,
	2017.
	
	\bibitem{rodriguez2018linear}
	Elisabeth Rodriguez-Heck,
	\newblock {\em Linear and quadratic reformulations of nonlinear optimization
		problems in binary variables},
	\newblock Ph.D. thesis, Universit{\'e} de Li{\'e}ge, Belgique, 2018.
	
	\bibitem{freedman2005energy}
	Daniel Freedman and Petros Drineas,
	\newblock ``Energy minimization via graph cuts: Settling what is possible,''
	\newblock in {\em 2005 IEEE Computer Society Conference on Computer Vision and
		Pattern Recognition (CVPR'05)}. IEEE, 2005, vol.~2, pp. 939--946.
	
	\bibitem{shaydulin2018community}
	Ruslan Shaydulin, Hayato Ushijima-Mwesigwa, Ilya Safro, Susan Mniszewski, and
	Yuri Alexeev,
	\newblock ``Community detection across emerging quantum architectures,''
	\newblock {\em arXiv preprint arXiv:1810.07765}, 2018.
	
	\bibitem{shaydulin2019network}
	Ruslan Shaydulin, Hayato Ushijima-Mwesigwa, Ilya Safro, Susan Mniszewski, and
	Yuri Alexeev,
	\newblock ``Network community detection on small quantum computers,''
	\newblock {\em Advanced Quantum Technologies}, p. 1900029, 2019.
	
	\bibitem{ushijima2017graph}
	Hayato Ushijima-Mwesigwa, Christian~FA Negre, and Susan~M Mniszewski,
	\newblock ``Graph partitioning using quantum annealing on the d-wave system,''
	\newblock in {\em Proceedings of the Second International Workshop on Post
		Moores Era Supercomputing}. ACM, 2017, pp. 22--29.
	
	\bibitem{ushijima2019multilevel}
	Hayato Ushijima-Mwesigwa, Ruslan Shaydulin, Christian~FA Negre, Susan~M
	Mniszewski, Yuri Alexeev, and Ilya Safro,
	\newblock ``Multilevel combinatorial optimization across quantum
	architectures,''
	\newblock {\em arXiv preprint arXiv:1910.09985}, 2019.
	
	\bibitem{hernandez2016novel}
	Maritza Hernandez, Arman Zaribafiyan, Maliheh Aramon, and Mohammad Naghibi,
	\newblock ``A novel graph-based approach for determining molecular
	similarity,''
	\newblock {\em arXiv preprint arXiv:1601.06693}, 2016.
	
	\bibitem{terry2019quantum}
	James~P Terry, Prosper~D Akrobotu, Christian~FA Negre, and Susan~M Mniszewski,
	\newblock ``Quantum isomer search,''
	\newblock {\em arXiv preprint arXiv:1908.00542}, 2019.
	
	\bibitem{crawford2016reinforcement}
	Daniel Crawford, Anna Levit, Navid Ghadermarzy, Jaspreet~S Oberoi, and Pooya
	Ronagh,
	\newblock ``Reinforcement learning using quantum boltzmann machines,''
	\newblock {\em arXiv preprint arXiv:1612.05695}, 2016.
	
	\bibitem{khoshaman2018quantum}
	Amir Khoshaman, Walter Vinci, Brandon Denis, Evgeny Andriyash, and Mohammad~H
	Amin,
	\newblock ``Quantum variational autoencoder,''
	\newblock {\em Quantum Science and Technology}, vol. 4, no. 1, pp. 014001,
	2018.
	
	\bibitem{negre2019detecting}
	Christian~FA Negre, Hayato Ushijima-Mwesigwa, and Susan~M Mniszewski,
	\newblock ``Detecting multiple communities using quantum annealing on the
	d-wave system,''
	\newblock {\em arXiv preprint arXiv:1901.09756}, 2019.
	
	\bibitem{lucas2014ising}
	Andrew Lucas,
	\newblock ``Ising formulations of many np problems,''
	\newblock {\em Frontiers in Physics}, vol. 2, pp. 5, 2014.
	
	\bibitem{he2013approximation}
	Simai He, Zhening Li, and Shuzhong Zhang,
	\newblock ``Approximation algorithms for discrete polynomial optimization,''
	\newblock {\em Journal of the Operations Research Society of China}, vol. 1,
	no. 1, pp. 3--36, 2013.
	
	\bibitem{bansal2010inapproximability}
	Nikhil Bansal and Subhash Khot,
	\newblock ``Inapproximability of hypergraph vertex cover and applications to
	scheduling problems,''
	\newblock in {\em International Colloquium on Automata, Languages, and
		Programming}. Springer, 2010, pp. 250--261.
	
	\bibitem{streif2019solving}
	Michael Streif, Florian Neukart, and Martin Leib,
	\newblock ``Solving quantum chemistry problems with a d-wave quantum
	annealer,''
	\newblock in {\em International Workshop on Quantum Technology and Optimization
		Problems}. Springer, 2019, pp. 111--122.
	
	\bibitem{isingdata}
	``Higher order ising dataset,''
	\url{https://github.com/flacrypto/hising-dataset}.
	
\end{thebibliography}

\end{document}